\documentclass[preprint,10pt]{elsarticle}
\usepackage{color}
\makeatletter
\def\ps@pprintTitle{%
 \let\@oddhead\@empty
 \let\@evenhead\@empty
 \let\@oddfoot\@empty
 \let\@evenfoot\@empty
}
\makeatother

\usepackage{amssymb}
\usepackage{amsmath}
\usepackage{array}
\usepackage{cases}
\usepackage{helvet}
\usepackage{multirow}
\usepackage{amsmath, amsfonts, amssymb}
\usepackage{amsthm}
\usepackage[english]{babel}
\usepackage{graphicx}
\usepackage{url}
\usepackage{bm}
\usepackage{enumitem} 
\usepackage{multirow}
\usepackage{booktabs}
\usepackage{algorithm}
\usepackage{algorithmic}
\usepackage{booktabs}
\usepackage{threeparttable}
\usepackage{graphicx}%导入图片
\usepackage{subcaption} %让跨双栏图片出现在正确的位置

\usepackage{natbib}
\usepackage{epstopdf}
\usepackage{setspace}

\usepackage[figuresright]{rotating}
\usepackage{geometry}
\usepackage{color}%
\usepackage{float}
\numberwithin{equation}{section}

\begin{document}
	
\newtheorem{definition}{$\mathbf{Definition}$}[section]
\newtheorem{theorem}{$\mathbf{Theorem}$}[section]
\newtheorem{corollary}{$\mathbf{Corollary}$}[section]
\newtheorem{lemma}{$\mathbf{Lemma}$}[section]
\newtheorem{proposition}{$\mathbf{Proposition}$}[section]
\newtheorem{property}{$\mathbf{Property}$}[section]
\newtheorem{remark}{$\mathbf{Remark}$}[section]
\newtheorem{example}{$\mathbf{Example}$}[section]

\begin{frontmatter}
\title{Energy Detection for Cognitive Radio with Distributional Uncertainty and Signal Variety under Nonlinear Expectation Theory}
  
\author[add1]{Jialiang Fu} \ead{fujialiang@amss.ac.cn}
\author[add1,add2]{Wen-Xuan Lang\corref{cor1}} \ead{langwx@amss.ac.cn}
		
\address[add1]{Academy of Mathematics and Systems Science, Chinese Academy of Sciences, Beijing 100190, China }
\address[add2]{National Center for Mathematics and Interdisciplinary Sciences, Chinese Academy of Sciences, Beijing 100190, China}

\cortext[cor1]{Corresponding author}

\nonumnote{This work is supported by National Key R\&D Program of China under Grant 2023YFA1009604.}
				
\begin{abstract}
\noindent			
Classical energy detection (ED) methods for cognitive radio (CR) have addressed noise uncertainty as deviations in noise power and signal uncertainty as variability in signal characteristics, which use probabilistic methods and assume fixed probability distributions for both. In practical scenarios, due to the uncertainty in probability models and the significant variation of primary signals encountered by receivers across different radio technologies, wireless environments exhibit not only distributional uncertainty but also substantial signal variety. In this paper, we develop a generalized formulation of energy detection based on nonlinear expectation theory, where both the signal and noise distributions are uncertain. We utilize the $G$-normal distribution to characterize channel noise. Moreover, to capture practical signal variety, the absolute values of transmitted signal random variables are assumed to lie within a bounded range $[\underline{\sigma}_X,\overline{\sigma}_X]$. The worst-case detection performance is then characterized by a double supremum, meaning over all admissible distributions and all possible signal realizations. We derive estimations for the minimum and maximum detection error probabilities, and demonstrate the validity of the results through numerical simulations. The proposed model generalizes the classical theoretical analysis of energy detection and offers a potential theoretical foundation for robust detection and information-theoretic analysis under distributional uncertainty.

\end{abstract}
		
\begin{keyword} 		
	sublinear expectation, $G$-normal distribution, energy detection, noise uncertainty, signal uncertainty.
	
	\noindent \textbf{2020 Mathematics Subject Classification  }  60E05, 94A13
\end{keyword}
		
\end{frontmatter}
\date{}

\section{Introduction}\label{sec1}
Cognitive radio (CR) is a communications paradigm \cite{JM1999cognitive} that can effectively address the problem of spectrum insufficiency. This technology allows systems to dynamically optimize the use of available frequency bands while ensuring less interference with licensed bands \cite{AN2021}. Energy detection (ED) is a fundamental approach \cite{HU1967energy} for spectrum sensing in CR to improve spectrum usage efficiency, where unlicensed (secondary) users must identify the presence or absence of licensed (primary) signals in a shared frequency spectrum. ED has been extensively studied in recent years, with numerous contributions focusing on its performance, optimization, and application in various CR scenarios \cite{FF2007on,SS2013on,YY2019improved,XZ2019signal,NI2020generalized}. This approach has become a widely adopted spectrum sensing technique for CR, owing to its versatility, simplicity, and low computational and implementation costs.

In classical information theory, probability theory provides the foundation, and the classical ED theory assumes that both signal and noise follow deterministic distributions, such as the Gaussian distribution. However, in practical wireless communication systems, imperfect calibration of system noise and the inability to know the signal's characteristics in advance can arise as significant challenges. To address these challenges, \cite{RT2008snr,RT2007snr} investigated the noise uncertainty, which refers to the inability to perfectly know the system noise properties to infinite precision, and modeled this uncertainty by combining the nominal and estimated noise power. \cite{ML2010performance,ML2013signal} introduced the signal uncertainty, which refers to the inability of a user to perfectly know beforehand the primary signals that might be present in the sensed band and their properties, and used a modified Gaussian model to characterize the received Signal-to-Noise Ratio (SNR) distribution. While these studies offer solutions, the common assumption is that there exists a particular distribution characterizing the random variable, despite uncertainty in the specific parameters and/or the form of this distribution. In other words, the existing methods still assume that the underlying probability model (i.e., the probability space) is deterministic.

However, in the complex physical world, the assumption of using precise and well-defined probability distributions to describe uncertainty is somewhat idealized. As mentioned in \cite{spiegelhalter2024does}, probability is not an objective property of the world but rather a construct based on subjective judgments and assumptions. In many cases, such as when there are limited observations or small sample sizes, there may not be a single, fixed probability distribution that accurately captures the underlying randomness. This reality challenges the traditional approach of relying on deterministic probability models, which typically assume the existence of an underlying ``true'' probability distribution. This limitation motivates the need for alternative approaches, such as nonlinear expectation theory, which can model uncertainty in a more flexible and robust manner, better handling incomplete or imprecise information.

In various communication environments, the channel noise may change its mean and variance due to changing transmission circumstances \cite{AJ2022snr}, and the detection and estimation of such changes are crucial for maintaining the stability and performance of communication systems. Moreover, in practical CR scenarios, the primary signals differ significantly in their transmission dynamics, power levels, and statistical properties. Previous works have provided excellent models for various signal types \cite{ML2013signal}, such as digital TV, DAB-T, and E-GSM, accurately capturing their characteristics. However, these models are typically based on deterministic probability distributions and require prior knowledge of the signal type. While these approaches are highly effective for specific signal types, a more general analytical method that does not rely on signal type identification is still needed. These needs, coupled with distributional uncertainty of both the signal and noise, motivate the need for a more general theoretical analysis framework for energy detection, one that can accommodate the inherent distributional uncertainty and diversity of signals in real-world wireless environments.

Nonlinear expectation theory \cite{peng2007G,peng2019nonlinear} is an emerging field of research that extends classical probability theory by using sublinear expectations instead of linear expectations, providing a powerful framework for effectively addressing the uncertainty of probability models. By incorporating the distributional uncertainty, this theory provides a more robust framework for modeling complex systems. One important result of the theory is the central limit theorem under sublinear expectations \cite{peng2008new}. This theorem indicates that the $G$-normal distribution is a highly universal distribution and widely applicable, and can be regarded as a generalization of the Gaussian distribution in probability theory. In \cite{WX2025hypothesis}, the authors formulated a hypothesis testing model for channel noise with distributional uncertainty and modeled the channel noise as the $G$-normal distribution, which is used to describe the uncertainty associated with the variance of the noise. This highlights the feasibility of employing nonlinear expectation theory to model and address such distributional uncertainty.

In this paper, we develop a generalized formulation of energy detection based on nonlinear expectation theory, where both the signal and noise distributions are uncertain. We model the noise as a $G$-normal distribution. The transmitted signal is assumed to be uncertain in both distribution and type, with only the knowledge that its absolute values take values within a bounded interval $[\underline{\sigma}_X, \overline{\sigma}_X]$, reflecting the uncertainty in both the statistical properties of the signal and its actual transmission levels. We still model the problem within the framework of binary hypothesis testing, with the detection performance characterized by taking the supremum over all admissible distributions, and the supremum or infimum over all possible realizations of the signals. Leveraging the properties of the $G$-normal distribution, we estimate the maximum and the minimum of the upper probability of missed detections, and the upper probability of false alarms. Numerical results show that our conclusions can serve as upper and lower bounds for results derived using probabilistic methods, providing a general theoretical reference that does not require knowledge of the signal type or detection noise characteristics.

This paper is organized as follows. In Section \ref{sec2}, we briefly introduce the fundamental concepts of nonlinear expectation theory. In Section \ref{sec3}, we establish the required hypothesis testing model with $G$-normal noise and uncertain input signals. Section \ref{sec4} is devoted to deriving the maximum and the minimum of the upper probability of missed detections and the upper probability of false alarms, and discusses the results under several types of fading. In Section \ref{sec5}, we present numerical experiments to compare the results of this paper with those of traditional results. Finally, we draw conclusions in Section \ref{sec6}.

\section{Preliminaries}\label{sec2}
In this section, we introduce some basic definitions and concepts of sublinear expectation theory, in particular some properties of $G$-normal distributions. Readers interested in sublinear expectation theory can refer to \cite{peng2019nonlinear} for more details, such as the law of large numbers and the central limit theorem under the framework of sublinear expectations.

Let $\Omega$ be a given set as the sample space and let $\mathcal{H}$ be a linear space of real-valued functions on $\Omega$ as the space of random variables. We assume that if $X_1,\cdots,X_n\in\mathcal{H}$, then $\varphi(X_1,\cdots,X_n)\in\mathcal{H}$ for all $\varphi$ in $C(\mathbb{R}^n)$, where $C(\mathbb{R}^n)$ denotes the space of continuous functions on $\mathbb{R}^n$. $C_{l,Lip}(\mathbb{R}^n)$ denotes the space of functions $\varphi$ on $\mathbb{R}^n$ satisfying
\begin{align}
|\varphi(\mathbf{x})-\varphi(\mathbf{y})|\leq C(1+|\mathbf{x}|^m+|\mathbf{y}|^m)|\mathbf{x}-\mathbf{y}| \ \text{for}\ \mathbf{x},\mathbf{y}\in\mathbb{R}^n,\notag
\end{align}
where the constant $C>0$ and the integer $m\in\mathbb{N}$ depend on $\varphi$.
\begin{definition}
	A sublinear expectation $\mathbb{E}:\mathcal{H}\rightarrow\bar{\mathbb{R}}:=[-\infty,\infty]$ is a functional on $\mathcal{H}$ satisfying:
	\begin{enumerate}
		\item{\textit{Monotonicity}: $\mathbb{E}[X]\geq \mathbb{E}[Y]$, if $X\geq Y$.}
		\item{\textit{Constant preserving}: $\mathbb{E}[c]=c,\forall c \in \mathbb{R}$.}
		\item{\textit{Sub-additivity}: $\mathbb{E}[X+Y]\leq \mathbb{E}[X]+\mathbb{E}[Y]$ whenever $\mathbb{E}[X]+\mathbb{E}[Y]$ is not of the form $+\infty-\infty$ or $-\infty+\infty$.}
		\item{\textit{Positive homogeneity}: $\mathbb{E}[\lambda X]=\lambda \mathbb{E}[X]$, for all $\lambda\geq0$.}
	\end{enumerate}
\end{definition}
Here, $0\cdot\infty:=0$. The triplet $(\Omega,\mathcal{H},\mathbb{E})$ is called a sublinear expectation space. If the inequality in 3) takes equality, then $\mathbb{E}$ is a linear expectation and $(\Omega,\mathcal{H},\mathbb{E})$ becomes a linear expectation space. If only 1) and 2) are satisfied, then $\mathbb{E}$ is called a nonlinear expectation, and $(\Omega,\mathcal{H},\mathbb{E})$ is called a nonlinear expectation space. $\mathbf{X}=(X_1,\cdots,X_n)$ is said to be an $n$-dimensional random vector on $(\Omega,\mathcal{H},\mathbb{E})$ if each coordinate of $\mathbf{X}$ is on $(\Omega,\mathcal{H},\mathbb{E})$.

We use the following definitions of identical distributions and independence under sublinear expectations in Gu and Zhang \cite{GZ2024}.
\begin{definition}
	Let $\mathbf{X}_1$ and $\mathbf{X}_2$ be two $n$-dimensional random vectors defined on $(\Omega_1,\mathcal{H}_1,\mathbb{E}_1)$ and $(\Omega_2,\mathcal{H}_2,\mathbb{E}_2)$, respectively. We say they are identically distributed, denoted by $\mathbf{X}_1\overset{d}{=}\mathbf{X}_2$, if
	\begin{align}
	\mathbb{E}_1[\varphi(\mathbf{X}_1)]=\mathbb{E}_2[\varphi(\mathbf{X}_2)],\ \forall\varphi\in C_{l,Lip}(\mathbb{R}^n).\notag
	\end{align}
\end{definition}
\begin{definition}\label{independence}
	Let $\mathbf{Y}_1$ be an $m$-dimensional random vector on $(\Omega,\mathcal{H},\mathbb{E})$ and $\mathbf{Y}_2$ be an $n$-dimensional random vector on $(\Omega,\mathcal{H},\mathbb{E})$. We say $\mathbf{Y}_2$ is independent of $\mathbf{Y}_1$ if
	\begin{align}
	\mathbb{E}[\varphi(\mathbf{Y}_1,\mathbf{Y}_2)]=\mathbb{E}[\mathbb{E}[\varphi(\mathbf{y}_1,\mathbf{Y}_2)]|_{\mathbf{y}_1=\mathbf{Y}_1}]\notag
	\end{align}
	for $\varphi\in C_{l,Lip}(\mathbb{R}^{m+n})$ such that $\hat{\varphi}(\mathbf{y}_1):=\mathbb{E}[|\varphi(\mathbf{y}_1, \mathbf{Y}_2)|]<\infty$ for all $\mathbf{y}_1\in\mathbb{R}^m$, $\mathbb{E}[\varphi(\mathbf{y}_1,\mathbf{Y}_2)]|_{\mathbf{y}_1=\mathbf{Y}_1}\in\mathcal{H}$, $\hat{\varphi}(\mathbf{Y}_1)\in\mathcal{H}$ and $\mathbb{E}[\hat{\varphi}(\mathbf{Y}_1)]<\infty$.
\end{definition}

A sequence of random variables $\{X_k\}_{k\geq1}$ is called i.i.d., if $X_{k+1}\overset{d}{=}X_k$ and $X_{k+1}$ is independent of $(X_1,\cdots,X_{k})$ for any $k=1,2,\cdots$. It is important to observe that $\mathbf{Y}_2$ is independent of $\mathbf{Y}_1$ does not in general imply that $\mathbf{Y}_1$ is independent of $\mathbf{Y}_2$ under sublinear expectations, and one can refer to Example 1.3.15 in \cite{peng2019nonlinear} for more details.

Next, we state the concept of $G$-normal distribution under sublinear expectations. 

\begin{definition}
	Let $G(x):=\frac{1}{2}(\overline{\sigma}^2x^+-\underline{\sigma}^2x^-)$, $x\in \mathbb{R}$, where $\underline{\sigma}$ and $\overline{\sigma}$ are two given constants satisfying $0\leq \underline{\sigma}\leq \overline{\sigma}$. A random variable $\xi$ defined on a sublinear expectation space $(\Omega,\mathcal{H},\mathbb{E})$ is said to follow a $G$-normal distribution with the lower variance $\underline{\sigma}^2$ and the upper variance $\overline{\sigma}^2$, denoted as $N(\{0\}\times[\underline{\sigma}^2,\overline{\sigma}^2])$, if for any $\varphi\in C_{l,Lip}(\mathbb{R})$, the function
	\begin{align}
	u(t,x):=\mathbb{E}[\varphi(x+\sqrt{t}\xi)],\ t\geq0, \ x\in\mathbb{R}\notag
	\end{align}
	is the unique viscosity solution of the following $G$-heat equation
	\begin{align}
	\left\{
	\begin{array}{ll}
	\partial_t u = G(\partial_{xx}^2 u),  \\
	u(0,x)=\varphi(x).
	\end{array}
	\right. \notag
	\end{align}
\end{definition}

\begin{remark}
	Another equivalent definition that a random variable $\xi$ on $(\Omega,\mathcal{H},\mathbb{E})$ is $G$-normally distributed is that 
	\begin{align}
	a\xi+b\overline{\xi}\overset{d}{=}\sqrt{a^2+b^2}\xi,\ \text{for}\  a,b\geq0,\notag
	\end{align}
	where $\overline{\xi}$ is an independent copy of $\xi$ and $G(x):=\frac{1}{2}\mathbb{E}[x\xi^2]$, $x\in \mathbb{R}$ (see Definition 2.2.4 and Example 2.2.14 of Peng \cite{peng2019nonlinear}).
\end{remark}

The notation of $G$-normal distribution is easy to be misunderstood as a normal distribution with uncertain variance, but in fact, we must not regard the $G$-normal distribution as a normal distribution with uncertain variance.

On the one hand, the $G$-normal distribution no longer has some properties of the normal distribution. For example, the odd order moment of the normal distribution is equal to 0, while the odd order moment of the $G$-normal distribution is strictly greater than 0 (see (ii) of Lemma 2 in \cite{HMS2012}). The random vector composed of $n$ independent normal distributions obeys the $n$-dimensional normal distribution, but the random vector composed of $n$ independent $G$-normal distributions does not obey the $G$-normal distribution (see Theorem 4.2 in \cite{EA2015}). One can refer to \cite{EA2015} to learn more about the differences between the normal distribution and the $G$-normal distribution.

On the other hand, the uncertainty of the $G$-normal distribution is generally greater than that of the normal distribution with uncertain variance, and the following lemma establishes this result.
\begin{lemma}((i) of Lemma 2 in \cite{HMS2012})
	Let $\xi$ be a $G$-normally distributed random variable on a sublinear expectation space $(\hat{\Omega},\hat{\mathcal{H}},\hat{\mathbb{E}})$ with the sublinear function $G_\xi$ on $\mathbb{R}$,
	\begin{align}
	G_\xi(x)=\frac{1}{2}(x^+-\sigma^2x^-), \ x\in \mathbb{R} \notag
	\end{align} 
	where $\sigma\in[0,1)$ is a given constant. Let $Z$ be a standard normal random variable on a probability space $(\tilde{\Omega},\tilde{\mathcal{F}},\tilde{P})$. Then for each $\varphi\in C_{l,Lip}(\mathbb{R})$, 
	\begin{align}
	\hat{\mathbb{E}}[\varphi(\xi)]\geq\sup_{\sigma\leq v\leq 1}E_{\tilde{P}}[\varphi(vZ)].\notag
	\end{align}
	
\end{lemma}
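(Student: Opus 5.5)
We compare the viscosity solution of the $G$-heat equation with classical heat-equation solutions. Fix $\varphi\in C_{l,Lip}(\mathbb{R})$ and set $u(t,x):=\hat{\mathbb{E}}[\varphi(x+\sqrt{t}\xi)]$. By definition, $u$ is the unique viscosity solution of
\begin{align}
\partial_t u = G_\xi(\partial^2_{xx}u)=\tfrac12\big((\partial^2_{xx}u)^+-\sigma^2(\partial^2_{xx}u)^-\big),\qquad u(0,x)=\varphi(x).\notag
\end{align}
For each fixed $v\in[\sigma,1]$, define $w_v(t,x):=E_{\tilde P}[\varphi(x+v\sqrt{t}Z)]$. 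This is the classical solution of the linear heat equation $\partial_t w_v=\frac{v^2}{2}\partial^2_{xx}w_v$ with $w_v(0,\cdot)=\varphi$, and it is smooth for $t>0$ with polynomial growth. The claimed inequality is $u(1,0)\ge w_v(1,0)$ for every $v$, followed by taking the supremum over $v$.

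The key pointwise observation is that for any real $a$ and $v^2\in[\sigma^2,1]$ we have $\frac{v^2}{2}a\le \frac12(a^+-\sigma^2a^-)=G_\xi(a)$. Indeed, for $a\ge0$ we get $v^2a\le a$, and for $a<0$ we get $v^2a\le\sigma^2a$. Therefore $\partial_t w_v-G_\xi(\partial^2_{xx}w_v)=\frac{v^2}{2}\partial^2_{xx}w_v-G_\xi(\partial^2_{xx}w_v)\le0$, so $w_v$ is a classical, hence viscosity, subsolution of the $G$-heat equation with the same initial datum $\varphi$.

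We then invoke the comparison principle for viscosity solutions of the $G$-heat equation in the class of functions of polynomial growth (Peng's book, Appendix C). This gives $w_v\le u$ on $[0,\infty)\times\mathbb{R}$, and in particular $E_{\tilde P}[\varphi(vZ)]=w_v(1,0)\le u(1,0)=\hat{\mathbb{E}}[\varphi(\xi)]$. Taking $\sup_{\sigma\le v\le1}$ yields the lemma.

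The main obstacle is the comparison step. We must check that $w_v$ lies in the polynomial-growth class where uniqueness and comparison hold, and that it is continuous up to $t=0$. Both follow from $\varphi\in C_{l,Lip}$ and the Gaussian moments of $Z$.

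An alternative that avoids PDE is the representation $\hat{\mathbb{E}}[\varphi(\xi)]=\sup_{\theta}E_P[\varphi(\int_0^1\theta_s\,dB_s)]$ over adapted processes $\theta$ with values in $[\sigma,1]$. Taking $\theta\equiv v$ constant gives $\int_0^1 v\,dB_s=vB_1\overset{d}{=}vZ$, which yields the lemma at once. I would use the PDE route, since it relies only on the definition given in the paper.
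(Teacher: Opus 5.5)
Your proof is correct. The paper gives no proof of this lemma; it quotes it from \cite{HMS2012}. So there is no in-paper argument to match, and your write-up stands as a self-contained derivation from the paper's PDE definition of the $G$-normal law.

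Both of your routes rest on the identity $G_\xi(a)=\sup_{v\in[\sigma,1]}\frac{v^2}{2}a$; your pointwise inequality is one direction of it.

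\begin{itemize}
\item The PDE route uses only the Section 2 definition plus the comparison principle for viscosity sub- and supersolutions of polynomial growth. This is strictly more than the uniqueness built into the definition, but it is standard: it is the comparison/domination theorem of Peng's Appendix C, applied to the subsolution $w_v$ and the supersolution $u$. Both are continuous on $[0,\infty)\times\mathbb{R}$ with polynomial growth. In return you get the stronger conclusion $u\ge w_v$ on all of $[0,\infty)\times\mathbb{R}$.
\item The representation route is a one-liner. However, it needs the Denis--Hu--Peng representation of $G$-expectation, which the paper never introduces.
\end{itemize}

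Either route is acceptable.

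One small repair is needed. If $\sigma=0$, the endpoint $v=0$ gives $w_0(t,x)=\varphi(x)$, which need not be smooth, so ``classical, hence viscosity, subsolution'' does not apply there. You can fix this in either of two ways.
\begin{itemize}
\item Check the subsolution property directly. If $w_0-\psi$ has a local maximum at $(t_0,x_0)$ with $t_0>0$, then $\partial_t\psi(t_0,x_0)=0$, since $w_0$ does not depend on $t$. Moreover $G_\xi\ge 0$ when $\sigma=0$. Hence $\partial_t\psi-G_\xi(\partial^2_{xx}\psi)\le 0$ at that point.
\item Alternatively, note that $v\mapsto E_{\tilde P}[\varphi(vZ)]$ is continuous on $[0,1]$ by dominated convergence, using the polynomial growth of $\varphi$. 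So the supremum over $[0,1]$ equals the supremum over $(0,1]$, where your smooth argument applies.
\end{itemize}
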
 

Although there are great differences between the $G$-normal distribution and the normal distribution, to some extent, the normal distribution with uncertain variance can approach the $G$-normal distribution (see Theorem 3.4 in \cite{LK2018}).

The definition of the $G$-normal distribution is related to partial differential equations. However, it is generally difficult to compute the sublinear expectation of random variables derived from composing the $G$-normal distribution with some functions. Nevertheless, the sublinear expectation can be easily calculated in the following two cases.
\begin{lemma}\label{le:convex}(Proposition 2.2.15 of \cite{peng2019nonlinear})
	Let $X\overset{d}{=}N(\{0\}\times[\underline{\sigma}^2,\overline{\sigma}^2])$ be on a sublinear expectation space $(\Omega,\mathcal{H},\mathbb{E})$, where $\overline{\sigma}\geq\underline{\sigma}>0$ are two given constants. Then for each convex (resp. concave) function $\varphi\in C_{l,Lip}(\mathbb{R})$, we have
	\begin{align}
	\mathbb{E}[\varphi(X)]
	&=\frac{1}{\sqrt{2\pi\overline{\sigma}^2}}\int_{-\infty}^{\infty}\varphi(y)\exp\Big(-\frac{y^2}{2\overline{\sigma}^2}\Big)dy,\notag\\
	&\bigg(resp. =\frac{1}{\sqrt{2\pi\underline{\sigma}^2}}\int_{-\infty}^{\infty}\varphi(y)\exp\Big(-\frac{y^2}{2\underline{\sigma}^2}\Big)dy\bigg).\notag
	\end{align}
\end{lemma}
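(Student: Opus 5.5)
We prove the convex case; the concave case follows by applying it to $-\varphi$. Since $-\varphi$ is convex, it gives $\mathbb{E}[-\varphi(X)]=E_{\overline{\sigma}}[-\varphi]$. It remains to show $-\mathbb{E}[-\varphi(X)]=E_{\underline{\sigma}}[\varphi]$. For that, we argue with the $G$-heat equation directly: when $\varphi$ is concave, the candidate $u(t,x)=E[\varphi(x+\sqrt{t}\,\underline{\sigma}Z)]$ has $\partial^2_{xx}u\le 0$, so $G(\partial^2_{xx}u)=\frac12\underline{\sigma}^2\partial^2_{xx}u$ and $u$ is the solution. Throughout, $Z$ is standard normal on a classical probability space and $E_{\sigma}[\varphi]:=E[\varphi(\sigma Z)]$ is the Gaussian integral with variance $\sigma^2$ appearing in the statement. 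The approach is to exhibit an explicit classical solution of the $G$-heat equation and invoke uniqueness of viscosity solutions.

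\textbf{Step 1 (candidate).} For convex $\varphi\in C_{l,Lip}(\mathbb{R})$ set
\begin{align}
v(t,x):=E\big[\varphi(x+\sqrt{t}\,\overline{\sigma}Z)\big]=\frac{1}{\sqrt{2\pi\overline{\sigma}^2 t}}\int_{-\infty}^{\infty}\varphi(y)\exp\Big(-\frac{(y-x)^2}{2\overline{\sigma}^2 t}\Big)dy .\notag
\end{align}
This $v$ is the classical solution of the linear heat equation $\partial_t v=\frac12\overline{\sigma}^2\partial^2_{xx}v$ with $v(0,\cdot)=\varphi$. It is smooth for $t>0$, continuous up to $t=0$, and has polynomial growth because $\varphi$ has polynomial growth.

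\textbf{Step 2 (convexity is preserved).} For each fixed $t$, the map $x\mapsto v(t,x)$ is an average of the convex functions $x\mapsto\varphi(x+\sqrt{t}\,\overline{\sigma}z)$, so it is convex. Hence $\partial^2_{xx}v\ge 0$ for $t>0$. Then $G(\partial^2_{xx}v)=\frac12\overline{\sigma}^2(\partial^2_{xx}v)^+=\frac12\overline{\sigma}^2\partial^2_{xx}v=\partial_t v$, so $v$ is a classical, and therefore viscosity, solution of the $G$-heat equation with initial datum $\varphi$.

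\textbf{Step 3 (uniqueness).} By the definition of the $G$-normal distribution, $u(t,x)=\mathbb{E}[\varphi(x+\sqrt{t}X)]$ is the unique viscosity solution in the polynomial-growth class. Therefore $u=v$, and evaluating at $t=1$, $x=0$ gives the claimed formula.

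The concave case is the mirror image. Convolving the concave $\varphi$ with the variance-$\underline{\sigma}^2$ Gaussian kernel gives a function with $\partial^2_{xx}\le0$, so $G(\partial^2_{xx}u)=-\frac12\underline{\sigma}^2(\partial^2_{xx}u)^-=\frac12\underline{\sigma}^2\partial^2_{xx}u$. The same uniqueness argument then applies.

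\textbf{Main obstacle.} The delicate step is Step 3: it relies on the comparison principle for viscosity solutions of the fully nonlinear $G$-heat equation in the class of functions with polynomial growth. We treat this as given, since the definition of the $G$-normal distribution already asserts that the solution is unique. Beyond that, one only needs the routine check that $v$ is continuous at $t=0$ and has polynomial growth. Both follow from dominated convergence and the $C_{l,Lip}$ bound on $\varphi$.
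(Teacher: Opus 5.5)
Your proof is correct and is the standard argument behind the cited Proposition 2.2.15 of Peng; the paper itself gives no proof, only the citation. The argument: the Gaussian heat-kernel solution with variance $\overline{\sigma}^2$ (resp.\ $\underline{\sigma}^2$) stays convex (resp.\ concave) in $x$, so $G(\partial^2_{xx}u)$ reduces to the corresponding linear operator, and uniqueness of the viscosity solution of the $G$-heat equation identifies it with $\mathbb{E}[\varphi(x+\sqrt{t}X)]$.

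Do delete your opening reduction, though. Because $\mathbb{E}$ is only sublinear, applying the convex case to $-\varphi$ gives $-\mathbb{E}[-\varphi(X)]=E_{\overline{\sigma}}[\varphi]$, which concerns the lower expectation and involves $\overline{\sigma}$. So the concave case does not follow by negation, and your stated target $-\mathbb{E}[-\varphi(X)]=E_{\underline{\sigma}}[\varphi]$ is false in general (take $\varphi(x)=-x^2$ with $\underline{\sigma}<\overline{\sigma}$). It is your direct mirror argument that correctly establishes $\mathbb{E}[\varphi(X)]=E_{\underline{\sigma}}[\varphi]$.
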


We need the following estimation of the solution to $G$-heat equation, which will be used in Section \ref{sec4}.
\begin{lemma}\label{le:estimate}(Lemma 4.4 in \cite{HL2026})
	Let $G(x):=\frac{1}{2}(\overline{\sigma}^2x^+-\underline{\sigma}^2x^-)$, $x\in \mathbb{R}$, where $\underline{\sigma}$ and $\overline{\sigma}$ are two given constants satisfying $0< \underline{\sigma}\leq \overline{\sigma}$. Suppose $u_\beta$ is the solution of the following PDE defined on $[0,\infty)\times\mathbb{R}$,
	\begin{align}\label{parabolic PDE}
	\left\{
	\begin{array}{ll}
	\partial_t u_\beta(t,x) = G(\partial_{xx}^2 u_\beta(t,x)),  \\
	u_\beta(0,x)=\exp\Big(-\frac{\beta|x-a|^2}{2\overline{\sigma}^2}\Big),
	\end{array}
	\right. \notag
	\end{align}
	where $\beta>0$ and $a\in\mathbb{R}$. Then we have
	\begin{align}
	u_\beta(t,x)\leq(1+\beta t)^{-\rho}\exp\bigg(-\frac{\beta|x-a|^2}{2(1+\beta t)\overline{\sigma}^2}\bigg),\notag
	\end{align}
	where $\rho:=\frac{\underline{\sigma}^2}{2\overline{\sigma}^2}$.
\end{lemma}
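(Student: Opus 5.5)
Because $u_\beta(t,x)=\mathbb{E}[\varphi(x+\sqrt t\,\xi)]$ for $\xi\overset{d}{=}N(\{0\}\times[\underline{\sigma}^2,\overline{\sigma}^2])$, I would prove the bound with a comparison argument for the $G$-heat equation. The plan is to show that the right-hand side
\begin{align}
v(t,x):=(1+\beta t)^{-\rho}\exp\Big(-\frac{\beta|x-a|^2}{2(1+\beta t)\overline{\sigma}^2}\Big)\notag
\end{align}
is a classical supersolution, namely $\partial_t v\geq G(\partial^2_{xx}v)$ on $(0,\infty)\times\mathbb{R}$ with $v(0,x)=u_\beta(0,x)$. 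The comparison principle for viscosity solutions of $G$-heat equations with bounded continuous data then gives $u_\beta\leq v$. Both functions are bounded by $1$, so they lie in the uniqueness class and comparison applies.

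\textbf{Step 1 (derivatives).} Set $s=1+\beta t$ and $y=x-a$, and write $v=s^{-\rho}e^{-\beta y^2/(2s\overline{\sigma}^2)}$. A direct computation gives
\begin{align}
\partial_t v&=v\Big(-\frac{\rho\beta}{s}+\frac{\beta^2y^2}{2s^2\overline{\sigma}^2}\Big),\notag\\
\partial^2_{xx}v&=v\Big(\frac{\beta^2y^2}{s^2\overline{\sigma}^4}-\frac{\beta}{s\overline{\sigma}^2}\Big).\notag
\end{align}

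\textbf{Step 2 (case split on the sign of $\partial^2_{xx}v$).}

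If $\partial^2_{xx}v\geq0$, then
\begin{align}
G(\partial^2_{xx}v)=\tfrac12\overline{\sigma}^2\partial^2_{xx}v=v\Big(\frac{\beta^2y^2}{2s^2\overline{\sigma}^2}-\frac{\beta}{2s}\Big).\notag
\end{align}
Hence
\begin{align}
\partial_t v-G(\partial^2_{xx}v)=v\,\frac{\beta}{s}\Big(\frac12-\rho\Big)\geq0,\notag
\end{align}
since $\rho=\underline{\sigma}^2/(2\overline{\sigma}^2)\leq\frac12$.

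If $\partial^2_{xx}v<0$, that is $\beta y^2<s\overline{\sigma}^2$, then
\begin{align}
G(\partial^2_{xx}v)=\tfrac12\underline{\sigma}^2\partial^2_{xx}v=v\Big(\frac{\underline{\sigma}^2\beta^2y^2}{2s^2\overline{\sigma}^4}-\frac{\rho\beta}{s}\Big).\notag
\end{align}
Hence
\begin{align}
\partial_t v-G(\partial^2_{xx}v)=v\,\frac{\beta^2y^2}{2s^2\overline{\sigma}^2}\Big(1-\frac{\underline{\sigma}^2}{\overline{\sigma}^2}\Big)\geq0.\notag
\end{align}
Note that $\rho$ is chosen precisely so that the $-\rho\beta/s$ terms cancel in this concave region. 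So $v$ is a supersolution everywhere.

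\textbf{Step 3 (conclusion).} The initial data coincide at $t=0$. By the comparison theorem for the $G$-heat equation, which is the uniqueness theory invoked in the definition of the $G$-normal distribution, $u_\beta\leq v$.

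The main obstacle is justifying comparison: the data are only bounded (in $C_{l,Lip}$), and $v$ is smooth but must be handled as a viscosity supersolution. Since $v$ is $C^{1,2}$, it is trivially a viscosity supersolution. Standard comparison for uniformly parabolic fully nonlinear equations with bounded continuous solutions (for instance, Peng's book, Appendix C) covers this case, because $\underline{\sigma}>0$.
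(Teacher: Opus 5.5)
The paper contains no proof of this lemma. It is quoted from Lemma 4.4 of \cite{HL2026} and used as a black box (at $t=1$) in the missed-detection theorem. So your proposal replaces a citation rather than competing with an argument in the paper, and it is correct.

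Checking the steps: your derivatives of $v$ are right. Where $\partial^2_{xx}v\ge0$ you get $\partial_t v-G(\partial^2_{xx}v)=v\frac{\beta}{s}(\frac12-\rho)\ge0$. Where $\partial^2_{xx}v<0$ the two $-\rho\beta v/s$ terms cancel exactly, leaving a nonnegative multiple of $1-\underline{\sigma}^2/\overline{\sigma}^2$. Since $G$ is nondecreasing, a $C^{1,2}$ classical supersolution is a viscosity supersolution. Moreover $v$ is smooth up to $t=0$ because $s\ge1$, and $u_\beta$ and $v$ are both bounded by $1$. The comparison theorem in Appendix C of \cite{peng2019nonlinear} therefore gives $u_\beta\le v$.

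What your route buys over the bare citation:
\begin{itemize}
\item It makes the paper self-contained.
\item It explains the constant. Evaluating the concave case at $y=0$ shows that $\rho=\underline{\sigma}^2/(2\overline{\sigma}^2)$ is the largest exponent for which this Gaussian ansatz is a supersolution. The convex case only needs $\rho\le\tfrac12$, and $\rho=\tfrac12$ recovers the exact heat-kernel solution when $\underline{\sigma}=\overline{\sigma}$.
\item It shows the hypothesis $\underline{\sigma}>0$ is inessential: the computation goes through with $\rho=0$.
\end{itemize}

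A minor correction to your last step: comparison for the $G$-heat equation needs only degenerate ellipticity of $G$ and polynomial growth, not uniform parabolicity. Your appeal to $\underline{\sigma}>0$ there is harmless but not needed.

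If you prefer to avoid viscosity theory, the same supersolution inequality can be used differently. Apply It\^o's formula to $v(t-r,x+\int_0^r\sigma_s\,dB_s)$ for each volatility process with $\sigma_s\in[\underline{\sigma},\overline{\sigma}]$. This process is a supermartingale, which gives the bound under every admissible $P$ and hence for the supremum defining $u_\beta$.
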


\section{Problem Formulation}\label{sec3}
Let $(\Omega,\mathcal{H},\mathbb{E})$ be a sublinear expectation space, and let $\mathcal{P}$ be a family of probability measures on $\sigma(\mathcal{H})$, where $\sigma(\mathcal{H})$ is the smallest $\sigma$-algebra generated by $\mathcal{H}$. We suppose that $\mathbb{E}$ is an upper expectation with respect to $\mathcal{P}$, i.e., 
\begin{align}
\mathbb{E}[X]=\sup_{P\in\mathcal{P}}E_P[X],\  X\in\mathcal{H},\notag
\end{align}
which means the uncertainty of $\mathbb{E}$ is characterized by $\mathcal{P}$.

The energy detection problem we consider in this paper is formulated as a binary hypothesis-testing problem with the following two hypotheses:
\begin{align}
\begin{split}
&\mathcal{H}_0:Y_i=\epsilon_iX_i+Z_i,\ \  i=1,\cdots,n,\\
&\mathcal{H}_1:Y_i=Z_i,\ \ \ \ \ \ \ \ \ \ \ i=1,\cdots,n,\notag
\end{split}
\end{align}
where $n$ is the number of samples collected during the observations of the channel, the random variables $\{X_i\}_{i=1}^n$ stand for the signals that may exist in the channel, $\{\epsilon_i\}_{i=1}^n$ are the channel fading random variables, $\{Z_i\}_{i=1}^n$ are the noise random variables in the channel, and $\{Y_i\}_{i=1}^n$ are the output random variables.

Now, we state our assumptions for the model of the energy detection problem under the framework of sublinear expectations. The noise random variables $\{Z_i\}_{i=1}^n$ are identically distributed and follow a $G$-normal distribution $N(\{0\}\times[\underline{\sigma}^2,\overline{\sigma}^2])$, where $0<\underline{\sigma}\leq\overline{\sigma}$ are given constants. $\{\epsilon_i\}_{i=1}^n$ share the common deterministic distribution under $\mathbb{E}$, and we consider the case when the fading is either a constant or follows a Rayleigh, Rician, or Nakagami distribution. Constant fading is applicable to the case where the transmitter, receiver, and environment are relatively static, such as fixed wireless access. Rayleigh fading applies to multipath fading environments where there is no direct path. In such environments, the signal reaches the receiver through multiple paths, as seen in urban mobile communication scenarios, where buildings and other obstacles cause multipath reflections, and no direct path exists. The Rician distribution is applicable to environments where a direct line-of-sight path exists, such as in rural areas and satellite communication. As a general distribution, the Nakagami distribution can be adjusted through its parameters to accommodate various fading conditions. If a signal is present in the channel, we assume that each $X_i^2$ lies in $[\underline{\sigma}_X^2,\overline{\sigma}_X^2]$ for $i=1,\cdots,n$, where $0<\underline{\sigma}_X\leq\overline{\sigma}_X$ are two given constants. We suppose that each observation is independent of the previous ones. During each observation, the noise $Z_i$ interferes with the existing signal in an additive way, while $\epsilon_i$ interferes in a multiplicative way. The noise and channel fading interfere with signals in different ways, and we assume the noise is independent of channel fading during each observation. Moreover, we assume that they are independent of the input signal during each observation. Hence, all these mean $\{X_1,\epsilon_1,Z_1,X_2,\epsilon_2,Z_2,\cdots,X_n,\epsilon_n,Z_n\}$ is an independent sequence under the sublinear expectation $\mathbb{E}$.
\par In practical energy detection problems, two types of errors can occur: missed detections and false alarms. A missed detection occurs when a signal is present in the channel, and the detector selects hypothesis $\mathcal{H}_1$, which may result in harmful interference to users of the channel. On the other hand, a false alarm occurs when the channel is idle, and the detector selects hypothesis $\mathcal{H}_0$, which may reduce the utilization of the channel. The probability in our model is uncertain, and we use the supremum to dominate the error probabilities uniformly. For the convenience of expression, we use $X$ to denote the signal random variable. Influenced by the uncertainty of the input signals, we consider \textit{the maximum of the upper probability of missed detections}, i.e., $\underset{|X|\in[\underline{\sigma}_X,\overline{\sigma}_X]}{\sup} \underset{P\in\mathcal{P}}{\sup} P(\text{choosing}\ \mathcal{H}_1)_{\mathcal{H}_0}$, and \textit{the minimum of the upper probability of missed detections}, i.e., $\underset{|X|\in[\underline{\sigma}_X,\overline{\sigma}_X]}{\inf} \underset{P\in\mathcal{P}}{\sup} P(\text{choosing}\ \mathcal{H}_1)_{\mathcal{H}_0}$. We also consider \textit{the upper probability of false alarms}, i.e., $\underset{P\in\mathcal{P}}{\sup}P(\text{choosing}\ \mathcal{H}_0)_{\mathcal{H}_1}$, which is unaffected by the uncertainty of the input signals.

To balance both error rates, a reasonable approach is to first dominate the missed detection rate and then adjust the sample size to dominate the false alarm rate. By using nonlinear expectation theory, our model offers a unified approach to handle both signal and noise uncertainties, which are not accounted for by traditional energy detection models.

\section{Theoretical Analysis}\label{sec4}
We take $\sum_{i=1}^{n}Y_i^2$ as the test statistic, and our statistical test $\Psi$ for the binary hypothesis-testing
problem has the form as follows:
\begin{align}
\Psi:\ \text{If}\ \sum_{i=1}^{n}Y_i^2\leq\lambda,\ \text{then reject the null hypothesis.}\notag
\end{align}
Here $\lambda$ is the threshold to be determined by the level of missed detections and false alarms. The following theorem gives a statistical test for the binary hypothesis-testing problem. 
\begin{theorem}\label{missed detections}
	For arbitrary given $\beta>0$, $p>0$, $\Psi$ with 
	\begin{align}
	\lambda=-\frac{2\overline{\sigma}^2n}{\beta}\ln\bigg(\mathbb{E}\bigg[\exp\bigg\{-\frac{\beta\underline{\sigma}_X^2}{2\overline{\sigma}^2(1+\beta)}\epsilon_1^2\bigg\}\bigg]\bigg)+\frac{\underline{\sigma}^2n}{\beta}\ln(1+\beta)+\frac{2\overline{\sigma}^2}{\beta}\ln p\notag
	\end{align}
	is a statistical test with the maximum of the upper probability of missed detections not exceeding $p$.
\end{theorem}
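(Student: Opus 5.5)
Under $\mathcal{H}_0$ a missed detection is the event $\{\sum_{i=1}^n Y_i^2\le\lambda\}$ with $Y_i=\epsilon_iX_i+Z_i$. We bound its upper probability by a Chernoff-type exponential Markov inequality. For every $P\in\mathcal{P}$,
\begin{align}
P\Big(\sum_{i=1}^n Y_i^2\le\lambda\Big)\le \exp\Big(\frac{\beta\lambda}{2\overline{\sigma}^2}\Big)E_P\Big[\exp\Big(-\frac{\beta}{2\overline{\sigma}^2}\sum_{i=1}^n Y_i^2\Big)\Big].\notag
\end{align}
Taking the supremum over $P$ replaces $E_P$ by $\mathbb{E}$. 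It then suffices to show, uniformly over all admissible signals with $|X_i|\in[\underline{\sigma}_X,\overline{\sigma}_X]$, that
\begin{align}
\mathbb{E}\Big[\exp\Big(-\frac{\beta}{2\overline{\sigma}^2}\sum_{i=1}^n Y_i^2\Big)\Big]\le\Big((1+\beta)^{-\rho}\,\mathbb{E}\Big[\exp\Big\{-\frac{\beta\underline{\sigma}_X^2\epsilon_1^2}{2\overline{\sigma}^2(1+\beta)}\Big\}\Big]\Big)^n,\qquad \rho=\frac{\underline{\sigma}^2}{2\overline{\sigma}^2}.\notag
\end{align}
Plugging in the stated $\lambda$ gives $e^{\beta\lambda/(2\overline{\sigma}^2)}=\mathbb{E}[\cdots]^{-n}(1+\beta)^{\underline{\sigma}^2 n/(2\overline{\sigma}^2)}p$. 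The $(1+\beta)$ factors cancel because $\rho n=\underline{\sigma}^2n/(2\overline{\sigma}^2)$, so the product is at most $p$. Since the bound does not depend on the particular signal, the supremum over $|X|\in[\underline{\sigma}_X,\overline{\sigma}_X]$ is also at most $p$.

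\textbf{Single-factor bound.} Fix $x,e$ and set $a=-ex$. The function $u(t,y):=\mathbb{E}[\exp(-\beta|y+\sqrt t Z_i+ex|^2/(2\overline{\sigma}^2))]$ is the solution of the $G$-heat equation with initial datum $\exp(-\beta|y-a|^2/(2\overline{\sigma}^2))$. Lemma \ref{le:estimate} at $t=1$, $y=0$ then gives
\begin{align}
\mathbb{E}\Big[\exp\Big(-\frac{\beta(ex+Z_i)^2}{2\overline{\sigma}^2}\Big)\Big]\le(1+\beta)^{-\rho}\exp\Big(-\frac{\beta e^2x^2}{2\overline{\sigma}^2(1+\beta)}\Big)\le(1+\beta)^{-\rho}\exp\Big(-\frac{\beta\underline{\sigma}_X^2e^2}{2\overline{\sigma}^2(1+\beta)}\Big),\notag
\end{align}
where the last inequality uses $x^2\ge\underline{\sigma}_X^2$. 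The right-hand side no longer depends on $x$.

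\textbf{Integrating out $(X_i,\epsilon_i)$.} By independence, $Z_i$ is independent of $(X_i,\epsilon_i)$. The independence definition (Definition \ref{independence}) therefore lets us compute $\mathbb{E}$ of the single-factor expression by first taking the expectation in $Z_i$ with $(x,e)$ frozen and then substituting $(X_i,\epsilon_i)$. Monotonicity of $\mathbb{E}$ then bounds it by $(1+\beta)^{-\rho}\mathbb{E}[\exp\{-\beta\underline{\sigma}_X^2\epsilon_i^2/(2\overline{\sigma}^2(1+\beta))\}]$. Because the $\epsilon_i$ are identically distributed, this equals the corresponding expression with $\epsilon_1$.

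\textbf{Main obstacle: factorizing the product over $i$.} We must justify that $\mathbb{E}[\prod_i f_i]\le\prod_i(\text{bound}_i)$, where each $f_i$ is the single-factor exponential. We proceed by backward induction on $i$ using the independence of $(X_n,\epsilon_n,Z_n)$ from the earlier variables. Conditioning on the first $n-1$ blocks, the inner expectation of the last factor is a constant bound $c$. Positivity of the remaining product together with positive homogeneity gives $\mathbb{E}[g\cdot c]=c\,\mathbb{E}[g]$ for $g\ge0$, $c\ge0$, and monotonicity handles the inequality. Iterating over $i=n,n-1,\dots,1$ yields the product bound.

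Two technical points need checking. The test functions must lie in $C_{l,Lip}$; they do, being bounded and Lipschitz exponentials of quadratics. The sequential independence must also cover the grouped vectors $(X_i,\epsilon_i,Z_i)$; this follows from the assumed independence of the whole sequence.
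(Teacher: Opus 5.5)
Your proposal is correct and follows the paper's proof essentially step for step. The shared steps are the exponential Chernoff bound with parameter $\beta/(2\overline{\sigma}^2)$, factorization over $i$ via independence, Lemma \ref{le:estimate} at $t=1$ applied with $(X_i,\epsilon_i)$ frozen, the bound $X_i^2\ge\underline{\sigma}_X^2$, and solving for $\lambda$. The only differences are cosmetic: you bound each factor before factorizing, and you get the product as an inequality by backward induction where the paper states it as an equality. One minor correction: the test functions lie in $C_{l,Lip}$ because their gradients grow at most linearly, not because they are globally Lipschitz in $(x,e,z)$, since the product $ex$ breaks global Lipschitz continuity.
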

\begin{proof}
	Note that the maximum of the upper probability of missed detections is
	\begin{align}
	\sup_{|X|\in[\underline{\sigma}_X,\overline{\sigma}_X]}\sup_{P\in\mathcal{P}}P\bigg(\sum_{i=1}^{n}(\epsilon_iX_i+Z_i)^2\leq\lambda\bigg),\notag 
	\end{align} 
	and for any $\beta>0$,
	\begin{align*}
	\sup_{P\in\mathcal{P}}P\bigg(\sum_{i=1}^{n}(\epsilon_iX_i+Z_i)^2\leq\lambda\bigg)
	&\leq\exp\bigg\{\frac{\beta\lambda}{2\overline{\sigma}^2}\bigg\}\mathbb{E}\bigg[\exp\bigg\{\frac{\sum_{i=1}^{n}-\beta(\epsilon_iX_i+Z_i)^2}{2\overline{\sigma}^2}\bigg\}\bigg]\\
	&=\exp\bigg\{\frac{\beta\lambda}{2\overline{\sigma}^2}\bigg\}\prod_{i=1}^{n}\mathbb{E}\bigg[\exp\bigg\{-\frac{\beta(\epsilon_iX_i+Z_i)^2}{2\overline{\sigma}^2}\bigg\}\bigg].
	\end{align*}
	By Definition \ref{independence} (the independence) and Lemma \ref{le:estimate}, for each $i=1,\cdots,n$, we have
	\begin{align*}
	\mathbb{E}\bigg[\exp\bigg\{-\frac{\beta(\epsilon_iX_i+Z_i)^2}{2\overline{\sigma}^2}\bigg\}\bigg]
	&=\mathbb{E}\bigg[\mathbb{E}\bigg[\exp\bigg\{-\frac{\beta(a+Z_i)^2}{2\overline{\sigma}^2}\bigg\}\bigg]\bigg|_{a=\epsilon_iX_i}\bigg]\\
	&\leq(1+\beta)^{-\rho}\mathbb{E}\bigg[\exp\bigg\{-\frac{\beta\epsilon_i^2X_i^2}{2\overline{\sigma}^2(1+\beta)}\bigg\}\bigg]\\
	&\leq(1+\beta)^{-\rho}\mathbb{E}\bigg[\exp\bigg\{-\frac{\beta\epsilon_i^2\underline{\sigma}_X^2}{2\overline{\sigma}^2(1+\beta)}\bigg\}\bigg]\\
	&=(1+\beta)^{-\rho}\mathbb{E}\bigg[\exp\bigg\{-\frac{\beta\epsilon_1^2\underline{\sigma}_X^2}{2\overline{\sigma}^2(1+\beta)}\bigg\}\bigg],
	\end{align*}
	where the last equality is due to the identically distributed property of $\{\epsilon_i\}_{i=1}^n$, and $\rho=\frac{\underline{\sigma}^2}{2\overline{\sigma}^2}$.
	\par Hence, 
	\begin{equation}\label{1}
	\sup_{|X|\in[\underline{\sigma}_X,\overline{\sigma}_X]}\sup_{P\in\mathcal{P}}P\bigg(\sum_{i=1}^{n}(\epsilon_iX_i+Z_i)^2\leq\lambda\bigg)\leq\exp\bigg\{\frac{\beta\lambda}{2\overline{\sigma}^2}\bigg\}(1+\beta)^{-\rho n}\bigg(\mathbb{E}\bigg[\exp\bigg\{-\frac{\beta\epsilon_1^2\underline{\sigma}_X^2}{2\overline{\sigma}^2(1+\beta)}\bigg\}\bigg]\bigg)^n.
	\end{equation}
	Then, letting the right side of \eqref{1} be equal to the given $p>0$, we obtain
	\begin{align}
	\lambda=-\frac{2\overline{\sigma}^2n}{\beta}\ln\bigg(\mathbb{E}\bigg[\exp\bigg\{-\frac{\beta\underline{\sigma}_X^2}{2\overline{\sigma}^2(1+\beta)}\epsilon_1^2\bigg\}\bigg]\bigg)+\frac{\underline{\sigma}^2n}{\beta}\ln(1+\beta)+\frac{2\overline{\sigma}^2}{\beta}\ln p,\notag
	\end{align}
	which leads to 
	\begin{align}
	\sup_{|X|\in[\underline{\sigma}_X,\overline{\sigma}_X]}\sup_{P\in\mathcal{P}}P\bigg(\sum_{i=1}^{n}(\epsilon_iX_i+Z_i)^2\leq\lambda\bigg)\leq p.\notag
	\end{align}
\end{proof}

Based on the proof of Theorem \ref{missed detections}, we can also derive the following corollary regarding the minimum of the upper probability of missed detections.

\begin{corollary}\label{missed detection 1}
	For statistical test $\Psi$ and arbitrary given $\beta>0$, there is
	\begin{align}
	\underset{|X|\in[\underline{\sigma}_X,\overline{\sigma}_X]}{\inf} \underset{P\in\mathcal{P}}{\sup} P\bigg(\sum_{i=1}^{n}(\epsilon_iX_i+Z_i)^2\leq\lambda\bigg) \leq \exp\bigg\{\frac{\beta\lambda}{2\overline{\sigma}^2}\bigg\}(1+\beta)^{-\rho n}\bigg(\mathbb{E}\bigg[\exp\bigg\{-\frac{\beta\epsilon_1^2\overline{\sigma}_X^2}{2\overline{\sigma}^2(1+\beta)}\bigg\}\bigg]\bigg)^n,\nonumber
	\end{align}
	where $\rho=\frac{\underline{\sigma}^2}{2\overline{\sigma}^2}$.
\end{corollary}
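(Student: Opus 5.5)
The plan is to rerun the Chernoff-type argument from the proof of Theorem \ref{missed detections}, but to bound the infimum over signals from above by evaluating at one specific admissible signal: the one with $|X_i|=\overline{\sigma}_X$ for every $i$. Since the infimum over $|X|\in[\underline{\sigma}_X,\overline{\sigma}_X]$ is at most the value at any admissible choice, it suffices to show that for this choice
\begin{align*}
\sup_{P\in\mathcal{P}}P\bigg(\sum_{i=1}^{n}(\epsilon_iX_i+Z_i)^2\leq\lambda\bigg)\leq \exp\bigg\{\frac{\beta\lambda}{2\overline{\sigma}^2}\bigg\}(1+\beta)^{-\rho n}\bigg(\mathbb{E}\bigg[\exp\bigg\{-\frac{\beta\epsilon_1^2\overline{\sigma}_X^2}{2\overline{\sigma}^2(1+\beta)}\bigg\}\bigg]\bigg)^n.
\end{align*}

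First I apply the exponential Markov inequality under each $P\in\mathcal{P}$. The indicator of $\{\sum_i Y_i^2\leq\lambda\}$ is dominated by $\exp\{\beta(\lambda-\sum_i Y_i^2)/(2\overline{\sigma}^2)\}$. Taking the supremum over $P$ gives the upper expectation $\mathbb{E}$ of this product. Next, the independence of the sequence $\{X_1,\epsilon_1,Z_1,\dots,X_n,\epsilon_n,Z_n\}$ (Definition \ref{independence}) and the nonnegativity of each factor let me split $\mathbb{E}$ of the product into the product of the $\mathbb{E}$'s, exactly as in the theorem's proof.

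For each factor, I condition on $a=\epsilon_iX_i$, as Definition \ref{independence} permits. Lemma \ref{le:estimate} with $t=1$ and $x=0$ then gives
\begin{align*}
\mathbb{E}\Big[\exp\Big\{-\tfrac{\beta(a+Z_i)^2}{2\overline{\sigma}^2}\Big\}\Big]\leq(1+\beta)^{-\rho}\exp\Big\{-\tfrac{\beta a^2}{2\overline{\sigma}^2(1+\beta)}\Big\},
\end{align*}
using that $Z_i$ is symmetric, so $x=0$ and the shift $a$ can be interchanged. Substituting $a^2=\epsilon_i^2X_i^2=\epsilon_i^2\overline{\sigma}_X^2$ for the chosen signal produces $\mathbb{E}[\exp\{-\beta\epsilon_i^2\overline{\sigma}_X^2/(2\overline{\sigma}^2(1+\beta))\}]$. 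The identical distribution of the $\epsilon_i$ lets me replace $\epsilon_i$ by $\epsilon_1$. Multiplying the $n$ factors gives the claimed bound.

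The only delicate point is justifying the choice $|X_i|\equiv\overline{\sigma}_X$ as admissible. The alternative route keeps a general admissible $X$ and uses monotonicity: $X_i^2\leq\overline{\sigma}_X^2$ makes the exponent larger, which pushes the bound in the wrong direction, so a general $X$ does not work. Hence one must fix an extremal signal realization. I expect the main obstacle to be phrasing this cleanly within the paper's loose "$\inf_{|X|\in[\cdot]}$" formalism, but the computation itself is routine given Theorem \ref{missed detections}.
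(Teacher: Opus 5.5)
Your proposal is correct and matches the argument the paper intends: the paper only says the corollary follows from the proof of Theorem \ref{missed detections}, and you rerun that proof while bounding the infimum over admissible signals by its value at $|X_i|\equiv\overline{\sigma}_X$, where the factor $\mathbb{E}[\exp\{-\beta\epsilon_i^2X_i^2/(2\overline{\sigma}^2(1+\beta))\}]$ is smallest. Your appeal to the symmetry of $Z_i$ is valid but not needed, since Lemma \ref{le:estimate} applies directly with $x=a$ and centre $0$ (as in the paper's proof of the theorem), or with $x=0$ and centre $-a$.
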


The above Theorem \ref{missed detections} and Corollary \ref{missed detection 1} provide the upper bound estimations of the maximum and the minimum of the upper probability of missed detections, and give a method for selecting $\lambda$ based on the given target missed detections probability. Subsequently, we give the estimation of the upper probability of false alarms for the statistical test $\Psi$ given in Theorem \ref{missed detections}.

\begin{theorem}
	For the threshold $\lambda$ in Theorem \ref{missed detections}, if there is $N>0$ such that $\lambda> n\overline{\sigma}^2$ for all $n\geq N$, then the upper probability of false alarms of $\Psi$ has the estimation that
	\begin{align}
	\sup_{P\in\mathcal{P}}P\bigg(\sum_{i=1}^{n}Z_i^2\geq\lambda\bigg)\leq \exp\bigg\{-\frac{d_{\beta}}{2\overline{\sigma}^2}\bigg\}\cdot\exp\bigg\{\bigg(\frac{1}{2}-\frac{1}{2}\ln\bigg(\frac{\overline{\sigma}^2}{k_{\beta}}\bigg)-\frac{k_{\beta}}{2\overline{\sigma}^2}\bigg)n\bigg\},\ n\geq N,\notag
	\end{align}
	where 
	\begin{align*}
	k_{\beta}&:=-\frac{2\overline{\sigma}^2}{\beta}\ln\bigg(\mathbb{E}\bigg[\exp\bigg\{-\frac{\beta\underline{\sigma}_X^2}{2\overline{\sigma}^2(1+\beta)}\epsilon_1^2\bigg\}\bigg]\bigg)+\frac{\underline{\sigma}^2}{\beta}\ln(1+\beta),\\
	d_{\beta}&:=\frac{2\overline{\sigma}^2}{\beta}\ln p.
	\end{align*}
\end{theorem}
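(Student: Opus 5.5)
We plan to prove the false alarm bound with an exponential Chernoff argument, using the convexity formula of Lemma \ref{le:convex} in place of Lemma \ref{le:estimate}. The threshold in Theorem \ref{missed detections} can be written as $\lambda=k_\beta n+d_\beta$. For any $s\in(0,\frac{1}{2\overline{\sigma}^2})$, Markov's inequality under each $P\in\mathcal{P}$ gives
\begin{align*}
\sup_{P\in\mathcal{P}}P\bigg(\sum_{i=1}^nZ_i^2\geq\lambda\bigg)\leq e^{-s\lambda}\,\mathbb{E}\bigg[\exp\bigg\{s\sum_{i=1}^nZ_i^2\bigg\}\bigg].
\end{align*}
By the independence of $\{Z_i\}$ and Definition \ref{independence}, applied iteratively from $Z_n$ down to $Z_1$, this factorizes as $\prod_{i=1}^n\mathbb{E}[e^{sZ_i^2}]$. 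All factors are nonnegative, so taking the inner expectation and pulling out the positive constant is legitimate. Since $z\mapsto e^{sz^2}$ is convex, Lemma \ref{le:convex} evaluates each factor under the upper variance:
\begin{align*}
\mathbb{E}[e^{sZ_1^2}]=(1-2s\overline{\sigma}^2)^{-1/2}.
\end{align*}
The one technical caveat is that $e^{sz^2}$ is not in $C_{l,Lip}$. We would handle this by truncation, applying the lemma to $\min(e^{sz^2},M)$ convexified, or to the convex functions $\varphi_M$ that increase to $e^{sz^2}$, and then passing to the limit by monotone convergence under each $P$. The integral is finite because $s<1/(2\overline{\sigma}^2)$.

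This yields
\begin{align*}
\sup_{P}P\bigg(\sum Z_i^2\geq\lambda\bigg)\leq \exp\Big\{-s\lambda-\tfrac{n}{2}\ln(1-2s\overline{\sigma}^2)\Big\}.
\end{align*}
Next we optimize only the $n$-dependent part $-sk_\beta n-\frac n2\ln(1-2s\overline{\sigma}^2)$. Differentiating gives $s^*=\frac{1}{2\overline{\sigma}^2}-\frac{1}{2k_\beta}$, which lies in $(0,\frac{1}{2\overline{\sigma}^2})$ precisely when $k_\beta>\overline{\sigma}^2$. This is where the hypothesis $\lambda>n\overline{\sigma}^2$ for $n\geq N$ enters: dividing by $n$ and letting $n\to\infty$, or arguing directly for $n\ge N$, it forces $k_\beta\geq\overline{\sigma}^2$, and we need strict inequality or a limiting argument in the boundary case. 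We expect this to be the main delicate point. With $1-2s^*\overline{\sigma}^2=\overline{\sigma}^2/k_\beta$, the exponent becomes
\begin{align*}
-s^*k_\beta n+\tfrac n2\ln(k_\beta/\overline{\sigma}^2)=\Big(\tfrac12-\tfrac{k_\beta}{2\overline{\sigma}^2}-\tfrac12\ln\big(\tfrac{\overline{\sigma}^2}{k_\beta}\big)\Big)n,
\end{align*}
which matches the stated rate.

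The remaining term is $e^{-s^*d_\beta}$. Since $d_\beta=\frac{2\overline{\sigma}^2}{\beta}\ln p$, it could have either sign, and the target is $e^{-d_\beta/(2\overline{\sigma}^2)}$. If $d_\beta\leq0$, that is $p\le1$ (the relevant regime for an error probability), then $0<s^*<\frac{1}{2\overline{\sigma}^2}$ gives $-s^*d_\beta\leq -\frac{d_\beta}{2\overline{\sigma}^2}$, and the bound follows. If $p>1$, the statement is trivial whenever the right-hand side is at least $1$. Otherwise we note that the exponential rate factor is at most $1$, since $1-x+\ln x\le0$, and we would treat this case separately, or observe that the probability bound $\leq1\le p$ is vacuous there. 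We regard checking the sign of $d_\beta$ and the admissibility of $s^*$ as the key verification steps; the remainder is routine algebra.
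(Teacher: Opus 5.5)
Your argument is correct for $0<p\le 1$, which is also the only regime the paper's proof covers. It uses the same Chernoff, factorization and upper-variance scheme as the paper, but chooses the Chernoff parameter differently.

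The paper minimizes $g(t)=e^{-t\lambda}(1-2\overline{\sigma}^2t)^{-n/2}$ exactly, at $t_n^*=\frac{1}{2\overline{\sigma}^2}-\frac{n}{2\lambda}$. This point is admissible precisely when $\lambda>n\overline{\sigma}^2$, which is why the theorem is stated for $n\ge N$. It gives the sharper intermediate bound $\exp\{\frac n2(1-y_n+\ln y_n)\}$ with $y_n=\lambda/(n\overline{\sigma}^2)$. Only afterwards does the paper replace $\ln y_n$ by $\ln(k_\beta/\overline{\sigma}^2)$, and that step silently uses $d_\beta\le 0$.

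Your $n$-independent choice $s^*=\frac{1}{2\overline{\sigma}^2}-\frac{1}{2k_\beta}$ is the limit of $t_n^*$. It lands directly on the stated form and needs only $k_\beta>\overline{\sigma}^2$. So it gives the bound for every $n\ge 1$, not just $n\ge N$, at the cost of the sharper intermediate estimate.

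For $\mathbb{E}[e^{sZ_1^2}]$, the paper expands the exponential as a power series and uses countable subadditivity of $\sup_P E_P$ on nonnegative terms. It then applies Lemma \ref{le:convex} to each $z^{2k}$, which is already convex and in $C_{l,Lip}$. Your convex-approximation route works equally well.

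Some points to tidy up:

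\emph{The boundary case.} The point you call delicate is not. Once $d_\beta\le 0$, the hypothesis at a single $n\ge N$ gives $(k_\beta-\overline{\sigma}^2)n>-d_\beta\ge 0$, so $k_\beta>\overline{\sigma}^2$ strictly.

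\emph{The case $p>1$.} Do not try to rescue it. Your remarks there prove nothing: when $d_\beta>0$ the right-hand side is always $<1$, so the ``trivial'' case never occurs. In fact the inequality can fail. Take $\underline{\sigma}=\overline{\sigma}=1$ (classical Gaussian noise), constant fading with $\underline{\sigma}_X^2\epsilon^2=2-2\ln 2$, and $\beta=1$, so that $k_\beta=1$. Take $p=e^{10}$, so that $d_\beta=20$; the hypothesis then holds with $N=1$. At $n=20$ the claim reads $P(\chi^2_{20}\ge 40)\le e^{-10}$, whereas the left side is about $5\cdot 10^{-3}$. So state $p\le 1$ explicitly, as the paper tacitly does.

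\emph{The factorization step.} Definition \ref{independence} applies only to $C_{l,Lip}$ test functions. Hence factorizing $\mathbb{E}[e^{s\sum_i Z_i^2}]$ also requires truncation, not just the one-dimensional evaluation; the paper uses $(-j)\vee Z_i\wedge j$ together with monotone convergence.

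\emph{The approximants.} Your first option, ``$\min(e^{sz^2},M)$ convexified'', does not work, because the convex envelope of a bounded function on $\mathbb{R}$ is constant. Use $e^{sz^2}$ on $[-M,M]$ extended by tangent lines instead.
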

\begin{proof}For $t\in(0,\frac{1}{2\overline{\sigma}^2})$,
	\begin{align}
	\sup_{P\in\mathcal{P}}P\bigg(\sum_{i=1}^{n}Z_i^2\geq \lambda\bigg)
	&\leq\inf_{0<t<\frac{1}{2\overline{\sigma}^2}}\frac{\mathbb{E}[e^{t\sum_{i=1}^{n}Z_i^2}]}{e^{t\lambda}}\nonumber\\
	&=\inf_{0<t<\frac{1}{2\overline{\sigma}^2}}\frac{\lim_{j\rightarrow\infty}\mathbb{E}[e^{t\sum_{i=1}^{n}((-j)\vee Z_i\wedge j)^2}]}{e^{t\lambda}}\nonumber\\
	&=\inf_{0<t<\frac{1}{2\overline{\sigma}^2}}\frac{\lim_{j\rightarrow\infty}(\mathbb{E}[e^{t((-j)\vee Z_1\wedge j)^2}])^n}{e^{t\lambda}}\nonumber\\
	&=\inf_{0<t<\frac{1}{2\overline{\sigma}^2}}\frac{(\mathbb{E}[e^{tZ_1^2}])^n}{e^{t\lambda}}\nonumber\\
	&=\inf_{0<t<\frac{1}{2\overline{\sigma}^2}}\frac{\bigg(\mathbb{E}\bigg[\sum_{k=0}^{\infty}\frac{(tZ_1^2)^k}{k!}\bigg]\bigg)^n}{e^{t\lambda}}\nonumber\\
	&\leq\inf_{0<t<\frac{1}{2\overline{\sigma}^2}}\frac{\bigg(\sum_{k=0}^{\infty}\frac{\mathbb{E}[(tZ_1^2)^k]}{k!}\bigg)^n}{e^{t\lambda}}.\nonumber
	\end{align}
	By Lemma \ref{le:convex}, we can calculate that
	\begin{align}
	\frac{\bigg(\sum_{k=0}^{\infty}\frac{\mathbb{E}[(tZ_1^2)^k]}{k!}\bigg)^n}{e^{t\lambda}}=\frac{1}{e^{t\lambda}(\sqrt{1-2\overline{\sigma}^2t})^n}=:g(t),\ 0<t<\frac{1}{2\overline{\sigma}^2}.\notag
	\end{align}
	By a simple calculation, we have
	\begin{align}
	g'(t)=\frac{2\overline{\sigma}^2\lambda t+n\overline{\sigma}^2-\lambda}{e^{t\lambda}(\sqrt{1-2\overline{\sigma}^2t})^{n+2}},\quad 0<t<\frac{1}{2\overline{\sigma}^2}.\notag
	\end{align}
	Then, for $n\geq N$,
	\begin{align}
	\inf_{0<t<\frac{1}{2\overline{\sigma}^2}}\frac{(\mathbb{E}[e^{tZ_1^2}])^n}{e^{t\lambda}}
	&\leq \exp\bigg(-\frac{n}{2}\ln\bigg(\frac{n\overline{\sigma}^2}{\lambda}\bigg)-\frac{\lambda}{2\overline{\sigma}^2}+\frac{n}{2}\bigg)\nonumber\\
	&\leq\exp\bigg\{-\frac{d_{\beta}}{2\overline{\sigma}^2}\bigg\}\cdot\exp\bigg\{\bigg(\frac{1}{2}-\frac{1}{2}\ln\bigg(\frac{\overline{\sigma}^2}{k_{\beta}}\bigg)-\frac{k_{\beta}}{2\overline{\sigma}^2}\bigg)n\bigg\},\nonumber
	\end{align}
	provided that $\lambda=k_{\beta}n+d_{\beta}$.
\end{proof}

Having derived the estimation for the upper probability of false alarms, it is crucial to determine how to choose $\beta$ in order to make the estimation meaningful. In the following examples, we analyze under which condition there exists $\beta$ such that the upper probability of false alarms of the statistical test $\Psi$ exponentially decays with the increase of sample size $n$. This analysis is conducted for four different fading channel models where the channel fading is constant, or follows a Rayleigh distribution, a Rician distribution, or a Nakagami distribution. These models represent typical fading behaviors encountered in practical communication systems.

\begin{example}\label{ex:constant}(Constant fading)
	If $\epsilon_i=\epsilon$, $i=1,\cdots,n$, where $\epsilon>0$ is a constant, then
	\begin{align}
		k_{\beta}=\frac{\underline{\sigma}_X^2\epsilon^2}{1+\beta}+\frac{\underline{\sigma}^2}{\beta}\ln(1+\beta).\notag
	\end{align}
	It is clear that $k_{\beta}\in(0,\underline{\sigma}_X^2\epsilon^2+\underline{\sigma}^2)$ when $\beta>0$. By a simple analysis, it can be verified that if $\underline{\sigma}_X^2\epsilon^2+\underline{\sigma}^2>\overline{\sigma}^2$,  then there exists $\beta_0>0$ such that
	\begin{align}
		\frac{1}{2}-\frac{1}{2}\ln\bigg(\frac{\overline{\sigma}^2}{k_{\beta}}\bigg)-\frac{k_{\beta}}{2\overline{\sigma}^2}<0\ \text{for all}\ \beta\in(0,\beta_0).\nonumber
	\end{align}
	\par Therefore, if $\underline{\sigma}_X^2\epsilon^2+\underline{\sigma}^2>\overline{\sigma}^2$, then there exists $\beta>0$ such that
	\begin{align}\label{3}
		\lambda>n\overline{\sigma}^2\ \text{for sufficiently large} \ n\ \text{and}\ \frac{1}{2}-\frac{1}{2}\ln\bigg(\frac{\overline{\sigma}^2}{k_{\beta}}\bigg)-\frac{k_{\beta}}{2\overline{\sigma}^2}<0,
	\end{align}
	which means the upper probability of false alarms of $\Psi$ exponentially decays to zero as the sample size $n$ goes to infinity.  
\end{example}

\begin{example}(Rayleigh fading)
	If the channel fading follows a Rayleigh distribution under $\mathbb{E}$ with the probability density function
	\begin{align}
	f(x)=\frac{x}{\sigma^2}\exp\bigg\{-\frac{x^2}{2\sigma^2}\bigg\},\ x\geq0,\notag
	\end{align}
	where $\sigma$ is the parameter of the Rayleigh distribution, then
	\begin{align}
	\mathbb{E}\bigg[\exp\bigg\{-\frac{\beta\underline{\sigma}_X^2}{2\overline{\sigma}^2(1+\beta)}\epsilon_1^2\bigg\}\bigg]=\frac{\beta\overline{\sigma}^2+\overline{\sigma}^2}{\beta(\sigma^2\underline{\sigma}_X^2+\overline{\sigma}^2)+\overline{\sigma}^2},\notag
	\end{align}
	and
	\begin{align}
	k_{\beta}=\ln\bigg(\frac{\beta(\sigma^2\underline{\sigma}_X^2+\overline{\sigma}^2)+\overline{\sigma}^2}{\beta\overline{\sigma}^2+\overline{\sigma}^2}\bigg)\cdot\frac{2\overline{\sigma}^2}{\beta}+\frac{\underline{\sigma}^2}{\beta}\ln(1+\beta).\notag
	\end{align}
	It is clear that $k_{\beta}\rightarrow0$ as $\beta\rightarrow\infty$, and $k_{\beta}\rightarrow2\underline{\sigma}_X^2\sigma^2+\underline{\sigma}^2$ as $\beta\rightarrow0$. Similar to the analysis of Example \ref{ex:constant}, we have if $2\underline{\sigma}_X^2\sigma^2+\underline{\sigma}^2>\overline{\sigma}^2$, then there exists $\beta>0$ such that \eqref{3} holds.
\end{example}

\begin{example}(Rician fading)
	When the channel fading follows a Rician distribution under $\mathbb{E}$, then the probability density function is
	\begin{align}
	f(x)=\frac{x}{\sigma^2}\exp\bigg\{-\frac{x^2+v^2}{2\sigma^2}\bigg\}I_0\bigg(\frac{xv}{\sigma^2}\bigg),\ x\geq0,\notag
	\end{align}
	where $I_0(\cdot)$ is the modified Bessel function of the first kind with order zero, and $\sigma$ and $v$ are the parameters of the Rician distribution. By the integral equation of the generalized Marcum $Q$-function in \cite{sun2008}, we have
	\begin{align}
	\mathbb{E}\bigg[\exp\bigg\{-\frac{\beta\underline{\sigma}_X^2}{2\overline{\sigma}^2(1+\beta)}\epsilon_1^2\bigg\}\bigg]=\frac{1}{2\sigma^2A_{\beta}+1}\cdot\exp\bigg\{-\frac{A_{\beta}v^2}{2\sigma^2A_\beta+1}\bigg\},\notag
	\end{align}
	and
	\begin{align}
	k_{\beta}=\frac{2\overline{\sigma}^2}{\beta}\bigg(\ln(2\sigma^2A_{\beta}+1)+\frac{A_{\beta}v^2}{2\sigma^2A_\beta+1}\bigg)+\frac{\underline{\sigma}^2}{\beta}\ln(1+\beta),\notag
	\end{align}
	where $A_{\beta}:=\frac{\beta\underline{\sigma}_X^2}{2\overline{\sigma}^2(1+\beta)}$. It is clear that $k_{\beta}\rightarrow2\sigma^2\underline{\sigma}_X^2+v^2\underline{\sigma}_X^2+\underline{\sigma}^2$ as $\beta\rightarrow0$, and $k_{\beta}\rightarrow0$ as $\beta\rightarrow\infty$. By a similar analysis we know that if $2\sigma^2\underline{\sigma}_X^2+v^2\underline{\sigma}_X^2+\underline{\sigma}^2>\overline{\sigma}^2$, then there exists $\beta>0$ such that \eqref{3} holds.
\end{example}

\begin{example}(Nakagami fading)
	If the channel fading follows a Nakagami distribution under $\mathbb{E}$ with the probability density function
	\begin{align}
	f(x)=\frac{2}{\Gamma(m)}\bigg(\frac{m}{\Omega}\bigg)^mx^{2m-1}\exp\bigg\{-\frac{m}{\Omega}x^2\bigg\},\ x\geq0,\notag
	\end{align}
	where $m\geq\frac{1}{2}$ and $\Omega>0$ are parameters of the Nakagami distribution, then by a calculation we have
	\begin{align}
	\mathbb{E}\bigg[\exp\bigg\{-\frac{\beta\underline{\sigma}_X^2}{2\overline{\sigma}^2(1+\beta)}\epsilon_1^2\bigg\}\bigg]=\bigg(\frac{2m\overline{\sigma}^2(1+\beta)}{\beta\Omega\underline{\sigma}_X^2+2m\overline{\sigma}^2(1+\beta)}\bigg)^m,\notag
	\end{align}
	and 
	\begin{align}
	k_{\beta}=m\ln\bigg(\frac{\beta(2m\overline{\sigma}^2+\Omega\underline{\sigma}_X^2)+2m\overline{\sigma}^2}{2m\overline{\sigma}^2\beta+2m\overline{\sigma}^2}\bigg)\cdot\frac{2\overline{\sigma}^2}{\beta}+\frac{\underline{\sigma}^2}{\beta}\ln(1+\beta).\notag
	\end{align}
	It is clear that $k_{\beta}\rightarrow0$ as $\beta\rightarrow\infty$, and $k_{\beta}\rightarrow\Omega\underline{\sigma}_X^2+\underline{\sigma}^2$ as $\beta\rightarrow0$. Hence, if $\Omega\underline{\sigma}_X^2+\underline{\sigma}^2>\overline{\sigma}^2$, then there exists $\beta>0$ such that \eqref{3} holds. 
\end{example}

\section{Numerical Results}\label{sec5}
In this section, we present numerical simulations to validate the effectiveness of the results in this paper. In \cite{ML2013signal}, the researchers considered the signal uncertainty and used different distribution parameters to model various signals. The method for calculating the error probability is
\begin{equation}\label{eq:classical1}
\overline{P}(\gamma_0)=\int P(y)f_{\gamma}^{MG}(y)dy,
\end{equation}
where $P(y)$ is the target probability, such as the probability of missed detections or false alarms. $\gamma$ denotes the SNR, follows a distribution with the density function
\begin{equation}\label{eq:classical2}
f_{\gamma}^{MG}(y) = \frac{\kappa}{\sqrt{2\pi}\sigma_{\gamma}}e^{-\frac{(y-\gamma_0)^2}{2\sigma_{\gamma}^2}}, \quad y\geq0,
\end{equation}
where $\kappa$ is the normalization constant, $\gamma_0$ is the average SNR, and $\sigma_{\gamma}^2=\Gamma n^{-\chi}\gamma_0^2$. It has been verified that different values of $\Gamma$ and $\chi$ can model the detection error probabilities for different types of input signals, such as digital TV ($\Gamma=0.01,\chi=0.59$), DAB-T($\Gamma=0.055,\chi=0.051$), E-GSM($\Gamma=0.23,\chi=0.24$), analogical TV($\Gamma=0.29,\chi=0.17$), UMTS($\Gamma=0.7,\chi=0.23$). In practical applications, it may be necessary to determine the signal type or parameter in advance.

In this paper, we present a general framework for probability models with uncertainty, where signal type detection is not required. Referring to the conclusions of this paper, energy consumption associated with signal type detection can be reduced. It is important to note that the probabilities of missed detections are inversely related to the probabilities of false alarms. Therefore, if we impose a strict requirement on missed detections, i.e., limiting the maximum of the upper probability of missed detections $\underset{|X|\in[\underline{\sigma}_X,\overline{\sigma}_X]}{\sup} \underset{P\in\mathcal{P}}{\sup} P(\text{choosing}\ \mathcal{H}_1)_{\mathcal{H}_0}$ to be less than a threshold $p$, this corresponds to a ``conservative scenario'' for false alarms, which drives up the upper probability of false alarms. Conversely, if we set the minimum of the upper probability of missed detections $\underset{|X|\in[\underline{\sigma}_X,\overline{\sigma}_X]}{\inf} \underset{P\in\mathcal{P}}{\sup} P(\text{choosing}\ \mathcal{H}_1)_{\mathcal{H}_0}$ to be less than $p$, it corresponds to an ``aggressive scenario'' for false alarms, which drives down the upper probability of false alarms. A reasonable inference is that these two scenarios represent the upper and lower bounds of the traditional conclusion.

\begin{figure}[h]
	\centering
	\subfloat[]{\includegraphics[width=3.2in]{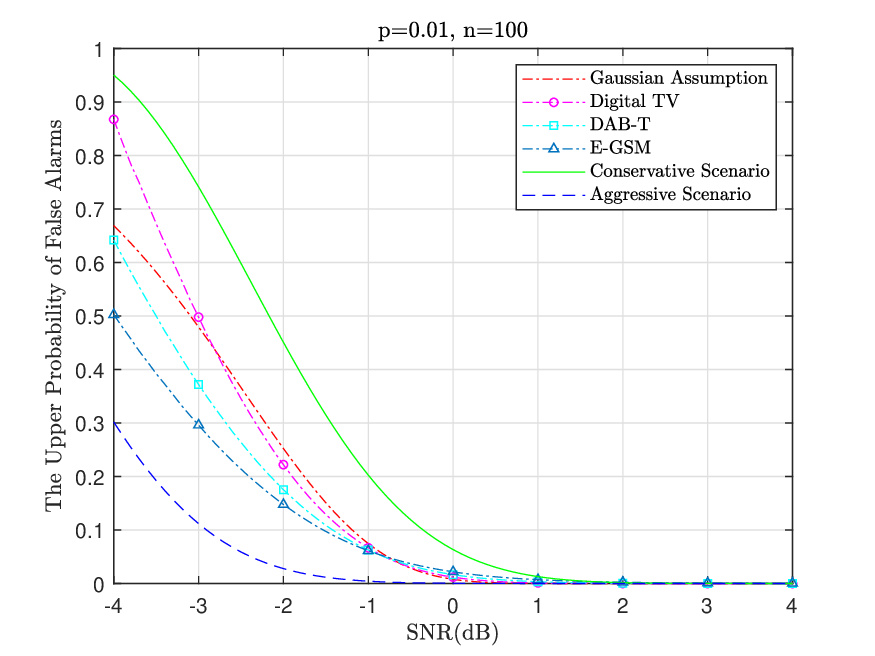}%
		\label{fig_1.1}}
	\hfil
	\subfloat[]{\includegraphics[width=3.2in]{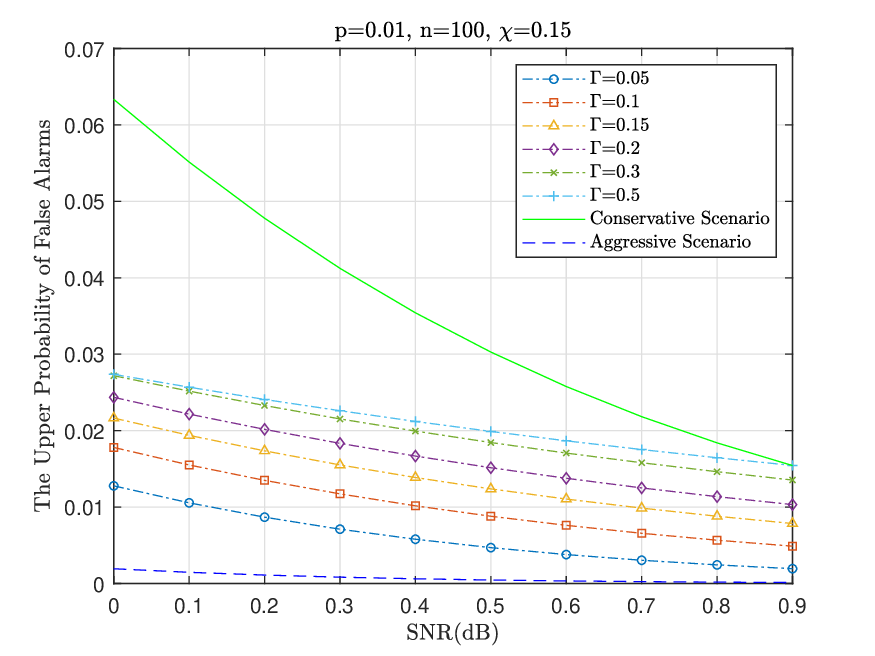}%
		\label{fig_1.2}}
	\captionsetup{font={scriptsize}}
	\caption{The upper probability of false alarms as a function of SNR: (a) various signals, (b) various values of $\Gamma$.}
	\label{fig_1}
\end{figure}

From a practical detection perspective, limiting the maximum of the upper probability of missed detections below a threshold can be interpreted as a worst-case performance metric for false alarms. This corresponds to the worst-case scenario in which false alarms may occur in a practical detection problem. Therefore, it represents the upper bound of the conclusion in classical robust detection. For a practical detection system, the actual error probability is expected to be no worse than this bound. In contrast, limiting the minimum of the upper probability of missed detections below a threshold can be interpreted as a favorable-case performance metric for false alarms, corresponding to the favorable-case scenario in which false alarms may occur in a practical detection problem. It represents the lower bound in classical robust detection. For a practical detection system, the actual error probability is typically no better than this bound.

In the following, we adopt a constant fading coefficient of $1$, which is also discussed in Example \ref{ex:constant}. Fig. \ref{fig_1} presents a comparison between our results and the traditional conclusions mentioned above. In Fig. \ref{fig_1}(a), we select the range of input signal values as $[\frac{1}{2}\sigma,3\sigma]$, which is based on the $3\sigma$-principle and $\sigma$ is determined by SNR\footnote{In communication theory, SNR is defined as the ratio of the input signal power to the noise power. This paper assumes that the noise has distributional uncertainty, and therefore, the SNR calculation used in this section is $\gamma^*:=\frac{1}{2}(\frac{\sigma^2}{\underline{\sigma}^2}+\frac{\sigma^2}{\overline{\sigma}^2})$. In addition, we always assume $\overline{\sigma}=\sqrt{2}\underline{\sigma}$ in this section.}. We plot the curves for both conservative (green solid line) and aggressive (blue dashed line) scenarios with respect to the upper probability of false alarms. Additionally, we use equations (\ref{eq:classical1}) and (\ref{eq:classical2}) to plot the corresponding curves for digital TV, DAB-T, and E-GSM, as well as the theoretical results assuming that both the input and noise follow a Gaussian distribution (i.e., the red dotted line with the legend ``Gaussian Assumption''). It can be observed that the conclusions of this paper encompass the curves corresponding to different signals in traditional approaches. For example, when the SNR ($\gamma^*$) is $-3$ dB and $p=0.01$, in the conservative scenario, the upper probability of false alarms can only be claimed to be less than $0.75$, while in the aggressive scenario, it can be claimed to be approximately $0.15$. This phenomenon is also reflected in the curves for different signal types. It can be observed that the values corresponding to different signal types can vary significantly, and relying solely on the Gaussian assumption curve makes it difficult to capture this difference. Fig. \ref{fig_1}(b) also shows that for different values of $\Gamma$, the results computed using equations (\ref{eq:classical1}) and (\ref{eq:classical2}) almost fall within the range of our conclusions.

\begin{remark}
	In Fig.\ref{fig_1}, the horizontal axis is labeled as SNR; the curve labeled ``Gaussian Assumption'' represents the corresponding SNR value; the curves labeled ``Conservative Scenario'' and ``Aggressive Scenario'' correspond to $\gamma^*$; and the curves labeled ``Digital TV'', ``DAB-T'', ``E-GSM'', and others correspond to $\gamma_0$.
\end{remark}

\begin{figure}[t]
	\centering
	\includegraphics[scale=0.55]{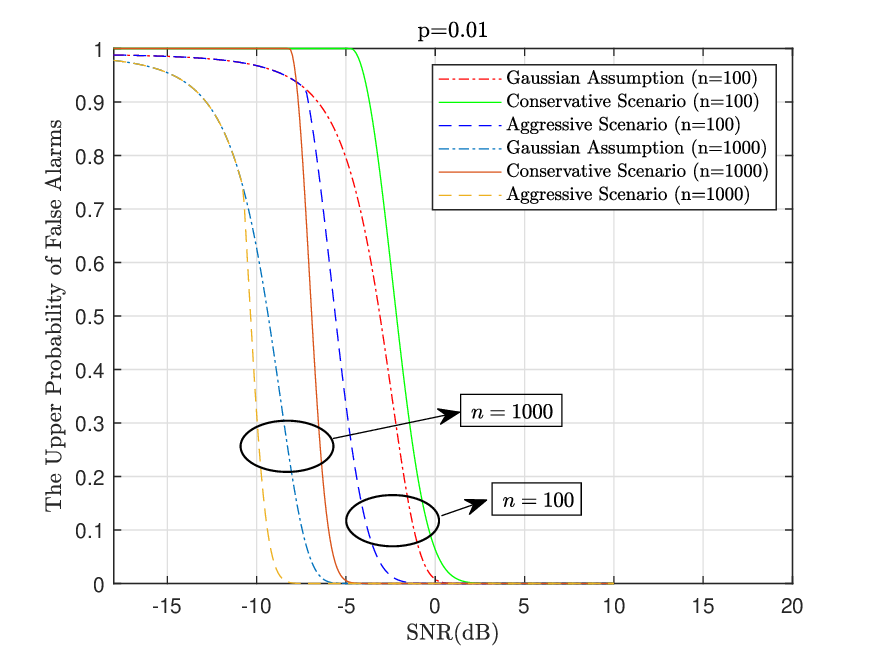}
	\captionsetup{font={scriptsize}}
	\caption{Theoretical performance of energy detection: $n=100$ and $n=1000$.}
	\label{fig_2}
\end{figure}

\begin{figure}[h]
	\centering
	\includegraphics[scale=0.55]{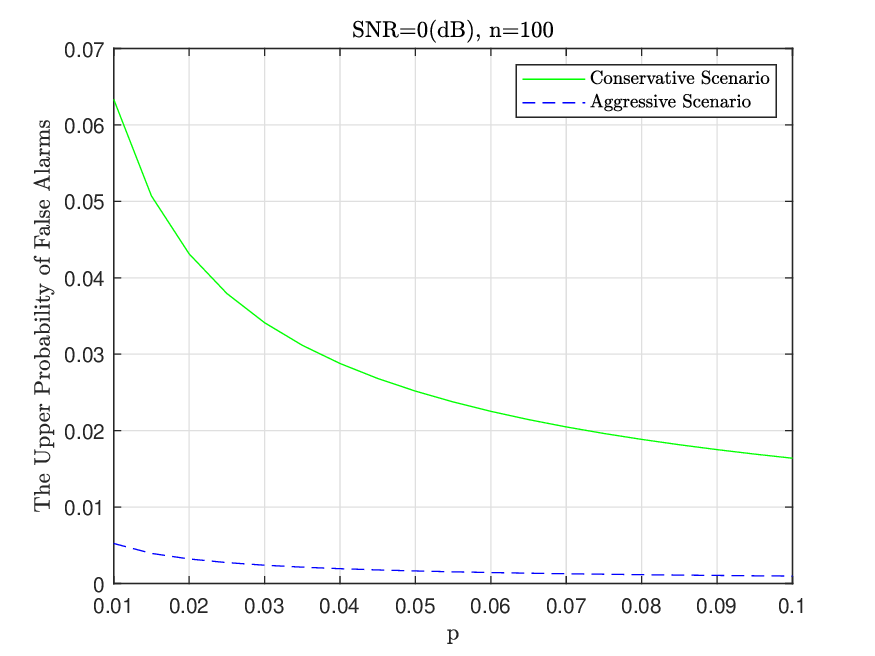}
	\captionsetup{font={scriptsize}}
	\caption{Receiver operating characteristic of an energy detector under distributional uncertainty.}
	\label{fig_3}
\end{figure}

Fig. \ref{fig_2} shows the comparison between the conclusions of this paper and the traditional theoretical results assuming that both the input and noise follow a Gaussian distribution (i.e., Gaussian assumption) for $n=100$ and $n=1000$.

Fig. \ref{fig_3} shows the joint variation curve of the upper probability of missed detections and the upper probability of false alarms. In the conservative scenario, the horizontal axis corresponds to the maximum of the upper probability of missed detections, while in the aggressive scenario, the horizontal axis corresponds to the minimum of the upper probability of missed detections. As illustrated in the figure, the numerical difference between the conservative and aggressive scenarios is quite significant, which highlights the need to consider both signal and distributional uncertainties.

\section{Conclusion}\label{sec6}
In this paper, we have explored the problem of energy detection in communication systems, focusing on addressing the challenge posed by distributional uncertainty and signal variety. We proposed a novel approach based on nonlinear expectation theory. Based on the uncertain-distribution hypothesis testing framework, and by utilizing the nonlinear expectation and properties of the $G$-normal distribution, we estimate the maximum and the minimum of the upper probability of missed detections, as well as the upper probability of false alarms. We also discussed the asymptotic properties of the conclusions under different channel fadings. Both numerical simulations and theoretical results indicate that the proposed model generalizes the classical theoretical analysis of energy detection. This approach lays the groundwork for energy robust detection strategies and provides a solid theoretical foundation for information-theoretic analysis in the presence of distributional ambiguity. Furthermore, the framework can be applied to a variety of scenarios in cognitive radio networks, particularly when the characteristics of primary signals and noise are not fully known in advance.

\section*{Acknowledgments}
The authors thank all colleagues and scholars who offered valuable discussions and suggestions in various stages of this research.

\end{document}